\begin{document}

\newtheorem{theo}{Theorem} \newtheorem{lemma}{Lemma}

\newcommand{\mss}[1]{\mbox{\scriptsize #1}}

\title{Coherent feedback that beats all measurement-based feedback protocols}
\author{Kurt Jacobs$^{1,2,3}$, Xiaoting Wang$^{1,2,4}$, and Howard M. Wiseman$^{5}$}
\address{$^1$ Advanced Science Institute, RIKEN, Wako-shi 351-0198, Japan}
\address{$^2$ Department of Physics, University of Massachusetts at Boston, 100 Morrissey
Blvd, Boston, MA 02125, USA} 
\address{$^3$ Hearne Institute for Theoretical Physics, Louisiana State University, Baton Rouge, LA 70803, USA}
\address{$^4$ Research Laboratory of Electronics, Massachusetts Institute of Technology, Cambridge, MA 02139, USA}
\address{$^5$ Centre for Quantum Computation and Communication Technology (Australian Research Council), Centre for Quantum Dynamics, Griffith University, Brisbane, Queensland 4111, Australia}

\ead{kurt.jacobs@umb.edu} 

\begin{abstract}
We show that when the speed of control is bounded, there is a widely applicable minimal-time control problem for which a coherent feedback protocol is optimal, and is faster than all measurement-based feedback protocols, where the latter are defined in a strict sense. The superiority of the coherent protocol is due to the fact that it can exploit a geodesic path in Hilbert space, a path that measurement-based protocols cannot follow. 
\end{abstract}

\pacs{03.67.-a, 03.65.Ta, 03.67.Pp, 03.65.Aa} 
\maketitle 

\section{Introduction}

Feedback control, useful in reducing the effects of noise on quantum dynamical systems~\cite{Dong10, Jacobs14, Wiseman10, Altafini12}, can be realized in one of two ways: repeatedly measure the system and apply operations conditional on the measurement results~\cite{Jakeman85, Wiseman93b, DJ, Yamamoto07b,  Koch10, Vijay12, Brakhane12, Riste12}, or couple the system to another quantum system (a ``quantum controller'') thus avoiding the use of measurements~\cite{Wiseman94, Gough09, Nurdin09, Zhang12}. This second form of feedback is referred to as \textit{coherent feedback}. Our motivation here is that numerical optimization performed on linear feedback networks by Nurdin, James, and Petersen~\cite{Nurdin09} showed that coherent feedback can outperform feedback mediated by measurements, but a clear dynamical explanation for this intriguing difference is yet missing. Just recently it has been shown that coherent feedback can do very much better than measurement-based feedback when using a linear interaction to cool a linear oscillator~\cite{Hamerly12, Hamerly13}. It has been suggested that passing the noise signal through the auxiliary system coherently allows it to cancel the quantum noise in a way that a measured signal cannot~\cite{Hamerly12, Nurdin09}, but it is not obvious how to design protocols to exploit the coherent advantage. 

Here we present a clear dynamical explanation for the superior performance of coherent feedback, for linear and nonlinear systems. The task we consider is that of using an auxiliary system to extract all the entropy from a ``target'' system in the shortest time, given that the norm of the interaction Hamiltonian between the target system and the auxiliary is bounded. We show that a coherent protocol can extract all the entropy from the target system faster than any measurement-based protocol. The reason for this is that the coherent protocol can follow a more efficient path in the state-space. Since entropy extraction is at the heart of all feedback protocols that regulate a system against noise, this result has profound implications. We expect that the notion of a geodesic, the most efficient path in state-space, will be useful in designing superior continuous-time coherent feedback protocols. 

\section{The relationship between coherent and measurement-based feedback}

We must first elucidate the precise relationship between measurement-based and coherent control. In doing so it is conceptually useful to define measurement-based feedback in both a strict sense and a broad sense. To begin, coherent feedback control is a scenario in which a system to be controlled (the ``target system'', from now on just ``the system'') is coupled to another quantum system (``the auxiliary'') for the purpose of implementing control. This scenario is \textit{feedback} control so long as the auxiliary has no direct access to the noise driving the system; in this case all information exploited by the auxiliary must be obtained from the system, and so the process is one of feedback~\cite{Jacobs20Q, Jacobs14}. We do not need an explicit feedback ``loop'' in the definition of feedback for the same reason that there is no explicit loop in the Hamiltonian description of the Watt governor~\cite{Maxwell1868}. 

In a measurement-based feedback (MF) protocol a sequence of measurements is made on the system, and its Hamiltonian is modified after each measurement in a manner determined by the measurement result.  We will refer to a single measurement with its subsequent feedback step as a single iteration of an MF protocol. Every quantum measurement can be modeled as an interaction with a probe system, in which each measurement result corresponds to a subspace of the probe. The subspaces corresponding to the set of all the measurement results are mutually orthogonal. Because of this it is simple to realize feedback within this model: one merely choses a different Hamiltonian for the system for each of the orthogonal subspaces. The result is an implementation of MF using only a coherent interaction between the system and an auxiliary (in this case the probe), and as a result all MF protocols can be implemented using coherent feedback as defined above. Examples of the implementation of measurement-based feedback using unitary operations can be found in, e.g.~\cite{VItali04, Jacobs09c, Sarovar05}. 

While MF is subsumed by coherent feedback, the reverse is not true. The reason for this is that each iteration of an MF protocol is divided into two steps, and the second (the feedback step) is restricted to a certain form: an action performed on the system alone that is conditional on a set of mutually orthogonal subspaces of the auxiliary. If we denote the set of projectors onto the auxiliary subspaces by $\{\Pi_m\}$, then the unitary that acts during the feedback step must have the form 
\begin{equation}
   U_{\mss{fb}} = \sum_m U_m \otimes \Pi_m ,   \label{Ufb}
\end{equation}
where the $M$ unitaries $U_n$ act on the system. Further, just prior to the implementation of the feedback step the controller is decohered in the basis $\Pi_m$. This decoherence process is part of the definition of measurement, since the possible measurement outcomes are classical numbers that can be recorded by an observer, and thus cannot have quantum coherences between them.  The decoherence applies the operation $\rho \rightarrow \sum_m \Pi_m \rho \Pi_m$. We note that in fact the decoherence part of the measurement process is not required for measurement-based feedback, or for our analysis below.

In considering the measurement step of an MF iteration, it is important to note that the general formulation of an efficient quantum measurement implicitly includes feedback. An efficient measurement is described by a set of operators $\{A_m\}$, where $\sum_m A_m^\dagger A_m = I$, the probability of the $m^{\mss{\textit{th}}}$ outcome is $p_m = \mbox{Tr}[A_m^\dagger A_m\rho]$, and the state following the measurement is $\tilde{\rho}_m = A_m\rho A_m^\dagger/p_m$. Every operator $A_m$ can be factored via the polar decomposition theorem as $A_m = U_m P_m$, where $U_m$ is unitary and $P_m$ is positive. Since $U_m$ is unitary it is equivalent to an explicit feedback step applied after a measurement described by the set of operators $\{P_m\}$. We now have a choice in defining measurement-based feedback. Under the strictest definition, all unitary actions are restricted to the explicit feedback part, and as a result the measurement must be described by a set of positive operators. Measurements in this class we will refer to as being ``bare''~\cite{Jacobs14}, and we will refer to the class of MF protocols for which the measurements are bare as \textit{strict-sense} MF. Many real feedback protocols belong to this class. Strict-sense MF is a significantly smaller class of protocols than coherent feedback because the unitary operators are restricted in both steps. Our primary result is that this restriction reduces the speed at which certain joint operations can be performed. 

It is also important to consider how our result impacts measurement-based feedback control when defined in the broadest sense. That is, when we allow the measurement step to include any generalized measurement. While the measurement step can now perform \textit{any} coherent feedback process, the distinction between MF and coherent feedback (CF) remains, due to the fact that an MF protocol is only a feedback protocol if the feedback step contributes to (improves) the performance of the control. Since the geodesic path for our task cannot be broken into a measurement and feedback step, the feedback part of the protocol cannot follow this geodesic. Measurement-based feedback can therefore only follow the shortest path when the feedback step vanishes.    

\section{The fastest way to prepare a pure state} 
\label{geo}

\textit{\textbf{The task:}} to take an $N$-state system that is initially in an arbitrary mixed state to a specified pure state in the shortest time. This simple control task is a central element of a large number of control problems, because the faster a continuous-time protocol can reduce the entropy of a system the lower the resulting steady-state noise. 

\vspace{1ex}\noindent\textit{\textbf{The constraint on the speed of control:}} We impose the physically relevant constraint that the norm of the interaction Hamiltonian, $H$, between the system and auxiliary is bounded. Specifically, we bound the difference between the maximum and minimum eigenvalues of $H$ by $2\mu$. The fact we choose this particular norm does not further restrict the generality of the result, since all norms in a finite dimensional space are equivalent. The reason that we chose this norm is because the constant $\mu$ has a clear physical meaning. It is the maximal speed of rotation that the Hamiltonian can generate in Hilbert space. We state this as a theorem, which is proved in~\cite{Anandan90, Carlini06}:

\begin{theo}
\label{th::ensforrho}
\textnormal{\hspace{-1mm} Let $|0\rangle$ and $|1\rangle$ be two orthogonal quantum states. If the difference between the maximum and minimum eigenvalues of a Hamiltonian $H$ is $2\hbar\mu$, then 
\vspace{1ex}\newline i)  The fastest way that $H$ can transform $|0\rangle$ to a state $|\psi\rangle = \sqrt{\alpha} |0\rangle + \sqrt{1-\alpha}|1\rangle$ is via the path $|\phi(t)\rangle = \cos(\mu t) |0\rangle + \sin(\mu t)|1\rangle$. This path also has the shortest length between the two states, and is therefore a \textit{geodesic}. 
\vspace{1ex}\newline ii) The minimum time required for the transformation is $\tau = \theta/\mu$, where $\theta = \cos^{-1}(\alpha)$ is the angle between the initial and final state-vectors. 
\vspace{1ex}\newline iii) The term in $H$ that is required to generate the transformation $|0\rangle \rightarrow |1\rangle$ is
\begin{equation}
    H^{\mss{geo}} = \hbar \mu \left(  |0\rangle \langle 1|  +  |1\rangle \langle 0|  \right) . \label{Hgeo01}
\end{equation}
}
\end{theo}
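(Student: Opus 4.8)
The plan is to prove the three parts together by establishing a universal lower bound on the transformation time from the geometry of Hilbert space, and then exhibiting a single Hamiltonian and path that saturate it. The lower bound comes from two ingredients: a bound on the instantaneous speed at which any admissible $H$ can move a state vector, and the fact that the length of any path joining the two states is at least the geodesic (shortest) distance between them. The saturating path turns out to be exactly $|\phi(t)\rangle$, generated by $H^{\mss{geo}}$, which establishes parts (i) and (iii), while equating the geodesic length with the time bound gives part (ii).

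First I would set up the metric. Writing Schr\"odinger's equation as $i\hbar|\dot\psi\rangle = H(t)|\psi\rangle$ and using the Fubini--Study line element $ds^2 = \langle d\psi|d\psi\rangle - |\langle\psi|d\psi\rangle|^2$, the standard Anandan--Aharonov relation gives the instantaneous speed of the state vector as $ds/dt = \Delta H/\hbar$, where $\Delta H = \sqrt{\langle H^2\rangle - \langle H\rangle^2}$ is the energy uncertainty. The key inequality is then a bound on $\Delta H$: because a global shift of $H$ by a multiple of the identity changes only an overall phase, I may assume the spectrum lies in $[-\hbar\mu,\hbar\mu]$, whence $\langle H^2\rangle \le (\hbar\mu)^2$ and $\Delta H \le \hbar\mu$ in every state (this is Popoviciu's inequality: the variance of a bounded observable is maximised by an equal-weight two-point distribution at the extreme eigenvalues). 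Hence $ds/dt \le \mu$ for every admissible, possibly time-dependent, $H$. Integrating, the total path length satisfies $L = \int_0^\tau (ds/dt)\,dt \le \mu\tau$; since $L$ cannot be smaller than the geodesic distance $\theta$ between $|0\rangle$ and $|\psi\rangle$, we obtain $\tau \ge \theta/\mu$, which is the lower bound of part (ii).

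Next I would show the bound is attained. Taking $H^{\mss{geo}} = \hbar\mu(|0\rangle\langle 1| + |1\rangle\langle 0|)$, whose eigenvalues are $\pm\hbar\mu$ so that the spectral spread is exactly $2\hbar\mu$, I integrate the Schr\"odinger equation from $|0\rangle$ to obtain the trajectory $|\phi(t)\rangle$ (the relative phase on $|1\rangle$ being a matter of convention, absorbed by redefining $|1\rangle$). A direct computation of $\Delta H$ along this trajectory gives $\Delta H = \hbar\mu$ at all times, so $ds/dt = \mu$ identically and the path is traversed at the maximal allowed speed, remaining within the real two-dimensional subspace spanned by $|0\rangle$ and $|1\rangle$, which is precisely the great-circle geodesic of the Fubini--Study metric. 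The state reaches $|\psi\rangle$ (equivalently $\cos(\mu\tau)=\sqrt{\alpha}$) at $\tau = \theta/\mu$, saturating the lower bound and simultaneously proving (i) and (iii).

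The step I expect to be the main obstacle is making the lower bound truly universal rather than specific to the natural time-independent guess. One must confirm that the speed bound $ds/dt \le \mu$, and hence $\tau \ge \theta/\mu$, holds for every admissible control---including arbitrarily time-dependent Hamiltonians and ones that temporarily leave the $\{|0\rangle,|1\rangle\}$ subspace---so that no cleverer protocol can beat the geodesic. This follows because the variance bound is purely spectral and applies instant by instant, and because the statement ``path length $\ge$ geodesic distance'' is a property of the metric independent of the generating dynamics; but it is this combination, rather than any single calculation, that carries the theorem.
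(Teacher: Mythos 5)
Your proof is correct and is essentially the argument the paper relies on: the paper gives no proof of its own but defers to Anandan--Aharonov and Carlini et al., whose derivation is exactly your combination of the Fubini--Study speed $ds/dt = \Delta H/\hbar$, the spectral (Popoviciu) bound $\Delta H \leq \hbar\mu$, and saturation by the great-circle path generated by $H^{\mss{geo}}$ (with the relative phase absorbed into $|1\rangle$, as you note). The only caveat is cosmetic: with the paper's own convention $\theta = \cos^{-1}(|\langle 0|\psi\rangle|)$ the angle is $\cos^{-1}(\sqrt{\alpha})$, as your computation correctly yields, rather than the $\cos^{-1}(\alpha)$ written in part (ii) of the statement.
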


\vspace{1ex} In addition to the above theorem, to obtain the optimal protocols below we will need an answer to the following question. If we wish to transform a single state, $|\psi\rangle$, to one of a set of $N$ orthogonal states, what is the minimum time required? More precisely, how small can we make the distance (the angle of rotation) to the state that is furthest from $|\psi\rangle$? The following lemma answers this question. 
    
\vspace{1mm}
\begin{lemma}
\label{lemm1}
\textnormal{\hspace{-1mm} Consider an arbitrary state $|\psi\rangle$ and a set of $N$ orthogonal states $|k\rangle$, $k=1,\ldots,N$. Denote the $N$ angles between $|\psi\rangle$ and each of the states $|k\rangle$ by $\theta_k = \cos^{-1}(|\langle\psi |k\rangle|)$, and the largest of these angles by $\theta_{\mss{max}}$. If we are allowed complete freedom in choosing the set of orthogonal states $\{|k\rangle\}$, then the smallest value of $\theta_{\mss{max}}$ we can achieve is 
\begin{equation}
   \min_{\{|k\rangle\}} \theta_{\mss{max}} = \cos^{-1}(1/\sqrt{N}), 
\end{equation}
and this is only achieved when $|\psi\rangle$ is unbiased with respect to the states $\{|k\rangle\}$. That is, when $\theta_k = \cos^{-1}(1/\sqrt{N})$, $\forall k$. 
}
\end{lemma}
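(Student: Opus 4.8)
The plan is to translate the geometric min--max problem into a statement about the overlap coefficients $c_k = \langle\psi|k\rangle$, dispatch it with an elementary averaging argument, and then exhibit an explicit basis that saturates the bound. First I would fix an orthonormal basis $\{|k\rangle\}$ and set $c_k = \langle\psi|k\rangle$. Working in the $N$-dimensional system Hilbert space, the $|k\rangle$ form a complete basis, so $|\psi\rangle = \sum_k c_k|k\rangle$ and hence $\sum_{k=1}^N |c_k|^2 = 1$. Because $\cos^{-1}$ is monotonically decreasing on $[0,1]$, the largest angle is $\theta_{\max} = \cos^{-1}(\min_k |c_k|)$, so minimizing $\theta_{\max}$ over all bases is exactly equivalent to maximizing $\min_k |c_k|$.

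The core inequality is then that the minimum of the $N$ nonnegative numbers $|c_k|^2$ cannot exceed their average: $\min_k |c_k|^2 \le \tfrac{1}{N}\sum_k |c_k|^2 = 1/N$. Taking square roots and applying the decreasing function $\cos^{-1}$ gives $\theta_{\max} \ge \cos^{-1}(1/\sqrt{N})$ for every choice of basis. For the uniqueness claim I would argue that if $\min_k|c_k| = 1/\sqrt{N}$ then $|c_k|^2 \ge 1/N$ for all $k$, while $\sum_k|c_k|^2 = 1 = \sum_k (1/N)$; since each term $|c_k|^2 - 1/N$ is nonnegative and they sum to zero, every term must vanish, forcing $|c_k| = 1/\sqrt{N}$ for all $k$. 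This is precisely the unbiased condition $\theta_k = \cos^{-1}(1/\sqrt{N})$ stated in the lemma.

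It remains to show the bound is attained, i.e.\ that some orthonormal basis realizes $|\langle\psi|k\rangle| = 1/\sqrt{N}$ for every $k$. Here I would choose coordinates so that $|\psi\rangle$ is the first basis vector, and take the $|k\rangle$ to be the columns of the discrete Fourier transform matrix (equivalently any $N\times N$ complex Hadamard matrix), all of whose entries have modulus $1/\sqrt{N}$. These columns are orthonormal, and the overlap of each with $|\psi\rangle$ is just its first entry, of modulus $1/\sqrt{N}$, so this basis saturates the bound and the minimum is achieved.

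The main obstacle is the achievability step rather than the bound itself: the averaging inequality is immediate, but one must actually construct---or invoke the existence of---a basis that is unbiased with respect to $|\psi\rangle$, which is exactly what the Fourier/Hadamard construction supplies. A secondary point to treat carefully is the dimension assumption: if the ambient space were larger than $N$ one would only have $\sum_k|c_k|^2 \le 1$, and I would remark that $\min_k|c_k|$ is maximized by choosing the $|k\rangle$ so that their span contains $|\psi\rangle$, making the sum equal to one and recovering the same bound.
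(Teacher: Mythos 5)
Your proof is correct and follows essentially the same approach as the paper's: both rest on the averaging argument that $\min_k |c_k|^2$ cannot exceed the fixed average $\frac{1}{N}\sum_k |c_k|^2$, with equality forcing all overlaps equal. Your version is in fact tidier and more complete than the paper's (which contains some $N$/$K$ typos), since you explicitly exhibit a Fourier/Hadamard basis to establish achievability and you note the reduction from an ambient space larger than $N$ dimensions --- the case the paper handles via the residual term $z|\chi\rangle$.
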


\begin{proof}
Since the states $\{|k\rangle\}$ are orthogonal, $|\psi\rangle = \sum_{k=0}^{N-1} c_k |k\rangle + z |\chi\rangle$ where $|\chi\rangle$ is orthogonal to all the $|k\rangle$. We have also $\sum_{k=0}^{N-1} |c_k|^2 = 1 - |z|^2$. Now let us take $|\psi\rangle$ to be unbiased with respect to the set $\{|k\rangle\}$. In this case $|c_k|^2 = (1 - |z|^2)/N$ for every $k$. Now, since $\sum_{k=0}^{N-1} |c_k|^2 = N/K$ for every orthogonal set $\{|k\rangle\}$, if we increase the moduli of any of the $c_k$ we must reduce at least one other of these moduli, thus increasing one of the angles $\theta_k$ above the value $\cos^{-1}(1/\sqrt{K})$. The value of  $\theta_{\mss{max}}$ can therefore be no less than $\cos^{-1}(1/\sqrt{K})$. 
\end{proof}

Combining the above theorem and lemma, the minimum time required to transform a single state to any one of $N$ mutually orthogonal states, under the Hamiltonian constraint above, is $t_{\mss{min}} = \cos^{-1}(1/\sqrt{N})/\mu$.  

\subsection{The fastest measurement-based protocol}

How fast can strict-sense MF perform our task? Well, for the bare measurement to prepare the system in a pure state each of the measurement operators must be a rank-one projector. We can therefore write the measurement operators as $A_k \propto |\chi_k\rangle \langle\chi_k |$ for some set of $K$ states $|\chi_k\rangle$ that span the $N$-dimensional space of the system. To realize this measurement the interaction with the auxiliary system, followed by decoherence in an auxiliary basis that we will denote by $\{ |k\rangle \}$, must prepare the joint system in the form 
\begin{equation} 
   \tilde{\rho} = \sum_{k=1}^{K-1} q_k |\chi_k\rangle \langle \chi_k | \otimes  |k\rangle \langle k| .
\end{equation}
Since the auxiliary starts in a single pure state, by lemma 1 the minimum angle required to rotate this state to each of the $K$ basis states $|k\rangle$ is $\cos^{-1}(1/\sqrt{K})$. The minimum value of $K$ is $N$, and so the minimum angle is $\cos^{-1}(1/\sqrt{N})$. This gives us the minimum time for the measurement step.  

To leave the system in a single pure state, the feedback step must rotate each of the states $|\chi_k\rangle$ to this state. If the set $\{|\chi_k\rangle\}$ were a basis, then our lemma above would tell us that the minimal angle required to rotate all of them to the final pure state would be $\cos^{-1}(1/\sqrt{N})$. But the $|\chi_k\rangle$ need not be orthogonal. Not surprisingly the minimum angle is still $\cos^{-1}(1/\sqrt{N})$ for an arbitrary set of states that decomposes the $N$-dimensional identity operator, but it is a little more involved to prove. We now extend lemma 1 to this case. 

\vspace{1ex}\noindent\textit{\textbf{Extending lemma 1:}} Let us assume that each of the $K$ states $|\chi_k\rangle$ has an overlap, $o_k$, with the state $|\psi_0\rangle$, that is greater than or equal to $1/\sqrt{N}$, and that the density matrix $w = \sum_{k=1}^{K-1} q_k |\chi_k\rangle \langle \chi_k |$ is maximally mixed. Since the probability for being in the state $|\psi_0\rangle$ is $ \langle \psi_0 |w|\psi_0\rangle = 1/N$, we must have $\sum_{k=1}^K |o_k|^2 q_k = 1/N$. Since $|o_k|^2 \geq 1/N$ by assumption, we must have $\sum_{k} q_k \leq 1$, with equality if and only if $|o_k|^2 = 1/N$ for all $k$. Now consider the probability that the state of the system is not $|\psi_0\rangle$, which is $Z = 1 - 1/N$, and is given by $\sum_{k=1}^K (1 - |o_k|^2) q_k$. Since $1 - |o_k|^2 \leq 1 - 1/N$, $Z$ can only be equal to $1 - 1/N$ if $\sum_{k} q_k = 1$, and from the analysis above, if none of the $|o_k|^2$ are less than $1/N$, this can only be true if all the $|o_k|^2$ are equal to $1/N$. We can therefore conclude that at least one of the overlaps $o_k$ must be no greater than $1/\sqrt{N}$.  

\vspace{1ex}

Since our protocol must prepare the system in a desired pure state, $|\psi_0\rangle$, for every initial state, in determining the requirements for the protocol, we have the freedom to chose this initial state. We will assume that the initial state is maximally mixed, as this will simplify our analysis. A theorem by Ando states that for any bare measurement, the density matrix 
\begin{equation}
   w = \sum_{k=1}^{K-1} q_k |\chi_k\rangle \langle \chi_k | 
\end{equation}
satisfies $S(w) \geq S(\rho)$, where $S$ is the von Neumann entropy~\cite{Ando89, Jacobs14}. Since $\rho = I/N$, where $I$ is the identity, we have also $w = I/N$. We can therefore conclude from the extension of lemma 1 above that the minimum angle required to rotate each of the $|\chi_k\rangle$ to $|\psi_0\rangle$ is the same as that required by the measurement step, being $\cos^{-1}(1/\sqrt{N})$. The fastest measurement-based feedback protocol therefore take a time of 
\begin{equation}
    \tau_{\mss{mf}} = \frac{2}{\mu} \cos^{-1}\left(\frac{1}{\sqrt{N}}\right) .   \label{tmeas}
\end{equation}

\subsection{The optimal protocol} 

We first consider a straightforward state-swapping protocol, which we will find is not optimal, but is faster than all measurement-based protocols for $N \geq 3$. We can prepare the system in a pure state by bringing up an auxiliary that is already in a pure state, and swapping the states of the two systems. If we denote a basis for the joint system by $\{| n\rangle |k\rangle\}$, meaning the tensor product of system state $|n\rangle$ with auxiliary state $|k\rangle$, then a swap operation simply swaps the indices of the system and auxiliary states:
\begin{equation}
       |n\rangle |k\rangle  \rightarrow  | k\rangle |n\rangle, \;\; \forall \; n,k  \;\;\;\; \mbox{(state swap)} . 
\end{equation}
We can perform a swap by choosing a joint Hamiltonian to perform the above transformation for each joint state $|n\rangle |k\rangle$. Since the initial and final states are orthogonal except when $n=k$, the Hamiltonian must rotate the states through $\theta=\pi/2$, and thus requires at time of 
\begin{equation}
        \tau_{\mss{swap}} = \frac{\pi}{2\mu} . 
\end{equation}

We now show that there is a faster protocol that will also swap the states of the two systems. We first note that we can perform local unitary transformations on both systems, after the swap has been performed, and this will not affect the entropy transferred by the swap. We can therefore implement the entropy transfer by using the transformation 
\begin{equation}
       |n\rangle |k\rangle  \rightarrow  | \psi_k \rangle |\phi_n\rangle, \;\; \forall \; n,k  \;\;\;\; \mbox{(state swap)} . 
\end{equation}
where $\{| \psi_k \rangle \}$ and $\{| \phi_n \rangle \}$ are any bases of the system and auxiliary, respectively. We now examine what happens when we chose the basis $\{| \psi_k \rangle \}$ to be unbiased with respect to $\{| n \rangle \}$ and $\{| \phi_n \rangle \}$ to be unbiased with respect to $\{| k \rangle \}$, so that 
\begin{equation}
      \langle \psi_k |n\rangle =  \langle \phi_n |k\rangle = \frac{1}{\sqrt{N}} . 
\end{equation}
For this new swap operation, for each $n$ and $k$ the final state $| \psi_k \rangle |\phi_n\rangle$ is no longer orthogonal to the initial state $|n\rangle |k\rangle$. The overlap between the initial and final states is the same for every $n$ and $k$, being 
\begin{equation}
   ( \langle \psi_k | \langle \phi_n |) |n\rangle |k\rangle  = \frac{1}{N} . 
\end{equation}
The time required for this swap operation is therefore  
\begin{equation}
        \tau_{\mss{opt}} = \frac{1}{\mu}\cos^{-1}\left( \frac{1}{N}\right) .   \label{topt}
\end{equation}

We now show that our second swap protocol is optimal. We first recall that from the proof of lemma 1 we know that if we have $N$ positive numbers $r_n$ that sum to a fixed value, then the way to make he smallest of these numbers maximal is to make them all equal. We now observer that for our protocol to transfer all of the entropy from the system, it must map a full set of basis states, $\{|n\rangle\}$ to a single state, $|\psi\rangle$. To satisfy unitarity, this means that the initial state of the auxiliary, which we will denote by $|\phi\rangle$, must be rotated to $N$ orthogonal states. Denoting these $N$ states of the auxiliary by $\{|\phi_n\rangle\}$. The transformation must therefore perform the map 
\begin{equation}
       |n\rangle |\phi\rangle  \rightarrow  | \psi \rangle |\phi_n\rangle, \;\; \forall \; n  .   \label{minter}
\end{equation}
To perform this transformation in the shortest time, we need to chose the bases $\{|n\rangle\}$ and $\{|\phi_n\rangle\}$ to maximize the minimum overlap between the initial and final states. Let us denote these $N$ overlaps by $r_n =   ( \langle \psi | \langle \phi |) |n\rangle |\phi_n\rangle$. If we write $| \psi \rangle$ in the basis $| n \rangle$, and $|\phi\rangle$ in the basis $\{|\phi_n\rangle\}$ as 
\begin{equation}
     | \psi \rangle = \sum_n  u_n |n\rangle , \;\;\;\;\;  |\phi\rangle  =   \sum_n  v_n |\phi_n\rangle ,  
\end{equation}
then
\begin{equation}
    Z \equiv  \sum_n r_n  = \mathbf{u} \cdot \mathbf{v}  \leq  \sqrt{(\mathbf{u} \cdot \mathbf{u})(\mathbf{v} \cdot \mathbf{v})} = 1,  
\end{equation}
where $\mathbf{u} = (u_1,\ldots, u_N)$, $\mathbf{v} = (v_1,\ldots, v_N)$, and the inequality is the Cauchy-Schwartz inequality. To maximize the smallest $r_n$ we need to maximize $Z$, which means choosing $\mathbf{u} = \mathbf{v}$ so that $r_n = |u_n|^2$. We thus maximize the minimum $r_n$ by setting $u_n = v_n = 1/\sqrt{N}, \forall n$. The minimum rotation angle required to perform the transformation in Eq.(\ref{minter}) is therefore $\theta_{\mss{min}} = \cos^{-1}(1/N)$, and this is achieved by our second protocol above.

\section{Discussion}
We have considered the time required to prepare a system in a pure state under a constraint on the speed of evolution in Hilbert space. We have shown that the fastest measurement-based protocol, defined in the strict sense, takes a time given by Eq.(\ref{tmeas}), and the optimal protocol takes a shorter time given by Eq.(\ref{topt}). For an $N$-dimensional system the optimal protocol is faster by a factor of 
\begin{equation}
        s(N) = \frac{2 \cos^{-1}\left(1/\sqrt{N}\right)}{\cos^{-1}\left(1/N\right)} . 
\end{equation}
For a single qubit this factor is $s(2) = 1.5$. It increases monotonically with $N$, and for large $N$ we have 
\begin{equation}
        s(N) = \frac{2 \cos^{-1}\left(1/\sqrt{N}\right)}{\cos^{-1}\left(1/N\right)} \approx 2 - \frac{1}{\sqrt{N}} , \;\;\;\; N \gg 1 . 
\end{equation} 
While we expect our results to help elucidate situations in which coherent feedback is a better choice than measurement-based feedback, we also expect that they will help in the design of coherent feedback protocols. It may be useful to consider the ideal transformation that will perform the desired task, and then consider implementing this transformation using geodesics. Firstly our analysis suggests that when designing coherent feedback protocols, translating measurement-based protocols directly into coherent protocols is not the best approach. As an example, the design of coherent protocols that implement quantum error correction (QEC) is a natural candidate for such an approach~\cite{Sarovar05}. The process of correcting errors in quantum encoded data is a single-shot feedback process, in which the ``error-syndrome'' is measured and the information used to apply the correction~\cite{Gottesman09}. Rather than translating this process directly into coherent measurement and feedback steps, geodesics could be used to design a faster implementation. 

Secondly, we have shown that the use of a Hamiltonian that implements the usual swap operation is not necessarily the fastest way to transfer entropy. Superior coherent protocols might be obtained by generalizing the process by which we found the optimal purification protocol. That is, by finding the class of transformations that will perform the desired function, and then searching for those that have the shortest geodesics. 

\section*{Acknowledgments} 

KJ and XW are partially supported by the NSF projects PHY-1005571 and PHY-1212413, and by the ARO MURI grant W911NF-11-1-0268. HMW is partially supported by the ARC Centre of Excellence CE110001027. 

\vspace{5mm}


\end{document}